\documentclass{article}

\usepackage{arxiv}
\usepackage[utf8]{inputenc} 
\usepackage[T1]{fontenc}    
\usepackage{hyperref}       
\usepackage{booktabs}       
\usepackage{amsfonts}       
\usepackage{nicefrac}       
\usepackage{microtype}      
\usepackage{graphicx}
\usepackage{orcidlink}

\usepackage{amsmath,amssymb,amsthm,mathtools,amsfonts}
\usepackage{algorithm}
\usepackage[noend]{algpseudocode}
\usepackage{enumitem}
\usepackage{cite}
\usepackage{url}
\usepackage{stackengine}

\newtheorem{definition}{Definition}[section]
\newtheorem{theorem}[definition]{Theorem}
\newtheorem{lemma}[definition]{Lemma}

\newtheorem{proposition}[definition]{Proposition}
\theoremstyle{remark}
\newtheorem{remark}[definition]{Remark}
\newtheorem{example}[definition]{Example}

\newcommand{\U}{\mathcal{U}}

\newcommand{\Rank}{\mathsf{Rank}}
\newcommand{\Select}{\mathsf{Select}}
\newcommand{\Enumerate}{\mathsf{Enumerate}}
\newcommand{\Insert}{\mathsf{Insert}}
\newcommand{\Delete}{\mathsf{Delete}}
\newcommand{\Aggregate}{\mathsf{Aggregate}}
\newcommand{\range}[2]{[#1,#2)}
\newcommand{\fp}{\mathsf{fp}}
\newcommand{\fpne}{\mathsf{fp}_\mathrm{NE}}

\newcommand{\Diff}{\Delta}

\newcommand{\weightsum}{\otimes}

\usepackage[dvipsnames]{xcolor}

\title{
    Range-Based Set Reconciliation
    via Range-Summarizable Order-Statistics Stores
}
\author{
    Elvio G. Amparore\orcidlink{0000-0003-1147-8985}\\
    Dipartimento di Informatica\\
    Università di Torino\\
    C.so Svizzera 185,\; 10149,\; Torino,\; Italy\\
    \url{elviogilberto.amparore@unito.it}
}
\date{March 2026}

\begin{document}
\maketitle

\begin{abstract}
Range-Based Set Reconciliation (RBSR) synchronizes ordered sets by
recursively comparing summaries of contiguous ranges and refining only the
mismatching parts. While its communication complexity is well understood,
its local computational cost fundamentally depends on the storage backend that
must answer repeated range-summary, rank, and enumeration queries during
refinement.

We argue that a natural storage abstraction for RBSR implementations
based on composable range aggregates is a
\emph{range-summarizable order-statistics store} (RSOS): a dynamic
ordered-set structure supporting composable summaries of contiguous
ranges together with rank/select navigation.
This identifies and formalizes the
backend contract needed for efficient recursive refinement, combining
range-summary support with order-statistics navigation for balanced
partitioning. We then show that a specific augmentation of
B\textsuperscript{+}-trees with subtree counts and composable summaries
realizes a RSOS, and we derive corresponding bounds on local reconciliation
work in this abstract storage model.

Finally, we introduce AELMDB, an extension of LMDB that realizes this
design inside a persistent memory-mapped engine, and evaluate it through
an integration with Negentropy.
The results show that placing the
reconciliation oracle inside the storage tree substantially reduces local
reconciliation cost on the evaluated reconciliation-heavy workloads
compared with an open-source persistent baseline based on auxiliary tree
caches, while the window-subrange ablation further confirms the usefulness
of the systems optimizations built on top of the core aggregate
representation.
\end{abstract}

\keywords{
    Range-based set reconciliation
    \and
    Anti-entropy synchronization
    \and
    Order-statistics
    \and
    B\textsuperscript{+}-tree aggregates
}

\section{Introduction}
\label{sec:introduction}

Set reconciliation asks two parties holding finite sets to determine which
elements are missing on either side. The problem appears throughout
replication, anti-entropy repair, peer-to-peer dissemination, and state
transfer. In these settings the total state may be large while the symmetric
difference between replicas is often small, so retransmitting the full state
is wasteful. The usual goal is therefore to make communication depend mainly
on the difference while keeping local computation manageable.

Classical approaches pursue near-optimal communication for sparse
differences, often in one or a few rounds. Characteristic-polynomial
schemes, difference digests, invertible Bloom lookup tables, and more recent
rateless variants belong to this line of work
\cite{minsky2003set,eppstein2011whats,goodrich2011iblt,yang2024rateless}.
These methods are powerful, but their best operating point often requires
substantial sketching or decoding work.

Range-Based Set Reconciliation (RBSR) takes a different path. Instead of
solving the whole sparse-difference problem through a single global sketch,
it recursively partitions a totally ordered universe into contiguous ranges,
compares protocol-level range values derived from those ranges, and
descends only where the compared values indicate a mismatch
\cite{meyer2023rbsr}. This yields logarithmically many rounds and
communication within a logarithmic factor of optimal. However, the
practical efficiency of RBSR depends critically on the storage backend:
the protocol is attractive only if the backend can summarize an arbitrary
range, split that range by relative cardinality, and enumerate small
residual parts without repeated scanning.

Practical RBSR, through the Negentropy protocol\cite{negentropy}, is already relevant 
and in use, in particular in Web3-related systems.
It has been adopted for Nostr synchronization through NIP-77\cite{nip77} and 
implemented in software such as \texttt{strfry}\cite{strfry}, 
\texttt{nostria-relay}\cite{nostriarelay}, and the
\texttt{@nostr-dev-kit/sync} package\cite{ndksync}.
It has also been discussed as a basis for Waku Sync\cite{waku_sync,waku_sync_forum}.
These developments suggest that range-based reconciliation is attracting interest as a
practical synchronization primitive in ordered, append-heavy distributed
systems.


This paper isolates that backend requirement as an explicit storage
abstraction. We argue that the appropriate interface for RBSR is a
range-summarizable order-statistics store (RSOS): a dynamic ordered-set
structure supporting (i) summaries of arbitrary contiguous ranges and
(ii) navigation by rank. The first component supplies the composable
aggregate information needed by recursive reconciliation, while the second
supports the balanced recursive partitioning used by the protocol. In this
view, efficient range-based reconciliation depends on a backend contract
that combines range summarization with order-statistics support.

Once these backend requirements are formulated as RSOS, a natural
realization is to augment a page-oriented B\textsuperscript{+}-tree with
subtree counts and composable range summaries. The leaves already store the
ordered set; the internal nodes can cache the metadata needed for
reconciliation; and the page-oriented structure matches external-memory and
memory-mapped storage. This yields a coherent design in which the same tree
stores both the data and the auxiliary metadata required for reconciliation.

The paper makes four contributions.
\begin{enumerate}[leftmargin=*,itemsep=2pt]
    \item It isolates and formalizes the RSOS abstraction, identifying the
    backend operations required to support RBSR efficiently.

    \item It shows that aggregate-augmented B\textsuperscript{+}-trees realize
    RSOS through standard subtree-count and composable-summary augmentation,
    and derives the resulting local-work bounds for reconciliation over such
    stores.
    
    \item It presents AELMDB, an extension of the open-source database LMDB
    \cite{lmdb}, as a concrete realization of this RSOS design.

    \item It evaluates AELMDB through an integration with the RBSR protocol
    implementation Negentropy, showing that the proposed storage design yields
    substantial reconciliation-time gains on the evaluated workload families.
\end{enumerate}

The rest of the paper proceeds from theory to systems. Section
\ref{sec:related} situates the work. Sections \ref{sec:model}--\ref{sec:bptree}
develop the formal model, protocol abstraction, and B\textsuperscript{+}-tree
realization. Section \ref{sec:systems} connects the theory to Negentropy and
AELMDB. Section \ref{sec:evaluation} summarizes the evaluation, and Sections
\ref{sec:discussion} and \ref{sec:conclusion} close the paper.

\section{Related Work}
\label{sec:related}

This section positions the paper along four nearby lines of work: classical
set reconciliation by sketches, range-based reconciliation and its requirements, 
authenticated tree-based structures, and practical protocol
deployments built around Negentropy.

\subsection{Classical set reconciliation}

A major line of research in set reconciliation aims to make communication
depend primarily on the size of the symmetric difference.
A central family of approaches is based on \emph{sketches}: rather than
exploiting ordered structure directly, each party computes a compact
summary from which the difference can later be recovered, often in one
round or through a coded stream of exchanged symbols.
Characteristic-polynomial interpolation achieves nearly optimal
communication for sparse differences \cite{minsky2003set}.
Subsequent work proposed more practical sketch-based techniques,
including difference digests and invertible Bloom lookup tables
\cite{eppstein2011whats,goodrich2011iblt}.
More recently, rateless encodings have been introduced to remain effective
across a wider range of difference sizes without requiring an accurate
estimate in advance \cite{yang2024rateless}.

Two more recent comparison points are worth noting explicitly.
PBS targets a different communication/computation trade-off, aiming for
low computational cost together with communication close to the
information-theoretic minimum \cite{gong2021pbs}.
CertainSync pushes in yet another direction, proposing rateless set
reconciliation with certainty and without parameter estimation
\cite{keniagin2025certainsync}.
These schemes remain sketch-centric: they treat reconciliation primarily
as a global coding problem over the set difference.

RBSR follows a different strategy.
Rather than encoding the whole sparse-difference problem into a single
global sketch, it recursively refines an ordered domain by comparing
protocol-level range values derived from contiguous ranges and descending
only on mismatching subranges.
This typically requires logarithmically more interaction, but it also
changes the computational profile of reconciliation: efficiency now
depends on the ability to summarize arbitrary ordered ranges, split them
by relative cardinality, and enumerate only the small residual
mismatches \cite{meyer2023rbsr}.
RBSR is especially useful in continuous synchronization settings, and more
generally when differences are concentrated in a small number of
contiguous regions, because only a limited part of the recursion tree
then needs to be explored.

\subsection{Range-based reconciliation and backend requirements}

The most systematic generic treatment of the RBSR family is due to
Meyer~\cite{meyer2023rbsr}.
Three features of that framework are especially important for the present
paper: reconciliation is defined over a totally ordered universe,
mismatching regions are refined through balanced range partitioning, and
efficient local execution depends on support for both composable range
summaries and subtree cardinalities.
We take this framework as the starting point for our formalization.

Earlier divide-and-conquer and partition-oriented ideas also appear in the
older scalable/practical set-reconciliation line of Minsky and
Trachtenberg \cite{minsky2002scalable}.
Our focus differs from those works in that we isolate the storage
interface required by recursive ordered refinement and study its
realization inside a persistent ordered index.

A complementary line of work shows that RBSR need not rely on one
specific summary-comparison mechanism.
Meyer and Scherer demonstrate that RBSR can also be realized using
conventional cryptographic hashes over history-independent,
clamping-invariant search trees \cite{meyer2024nonhomomorphic}.
This broadens the protocol-level design space.
Our focus is orthogonal: rather than asking which comparison constructions
are possible in general, we ask which dynamic storage abstraction most
naturally matches the operational requirements of range-based
reconciliation and how that abstraction can be realized efficiently in
persistent storage.

At the data-structure level, the mechanisms used here are related to
standard order-statistics augmentations of balanced search trees and to
counted or aggregate-carrying variants of page-oriented search trees
\cite{cormen2022intro,tatham2004countedbtrees,roura2001newmethod}.
Recent work on AB-tree provides evidence that that maintaining
aggregate metadata inside page-oriented search trees
supports efficient sampling under updates \cite{zhao2022abtree}.
Our contribution is therefore not the introduction of subtree-count
augmentation in isolation, but the observation that efficient RBSR
requires a precise bundle of capabilities: composable range summaries,
rank/select navigation, and small-range enumeration.
These can be realized coherently inside one persistent ordered index.

\subsection{Authenticated trees and canonical tree representations}

Merkle trees and authenticated search structures are natural comparison
points, since they also cache subtree digests and support recursive
comparison.
Merkle Search Trees extend this idea to ordered CRDT states in open
networks, but they do so by imposing a rigid, canonically derived tree
representation of the data \cite{auvolat2019mst}.
This is a useful comparison precisely because it highlights a key
difference with RBSR.

Authenticated B-tree literature is also relevant here.
In particular, Li et al.\ introduce dynamic authenticated page-oriented
indexes, including the Embedded Merkle B-tree, for verifiable query
answering over outsourced databases \cite{li2006dynamicauth}.
These structures share with our setting the idea of placing cached
summary/authentication metadata inside a B-tree-family index, but their
goal is different: they support authenticated query proofs, whereas RBSR
needs efficient range aggregation and balanced cuts by item rank.

RBSR does not reconcile two fixed tree shapes.
Instead, it operates on \emph{logical ranges of an ordered set}.
The recursion is therefore driven by the order of the elements
themselves, not by the topology of a canonical tree representation.
Balanced refinement is expressed in terms of cardinality within a range,
which makes order statistics central to the storage interface.
This distinction is important for our storage-theoretic view: the
question is not how to authenticate a predetermined tree, but how to
maintain a dynamic ordered set that supports efficient range summaries
and balanced cuts.

\subsection{Practical protocols and deployed systems}

Negentropy \cite{negentropy,nip77} is a practical instantiation of RBSR. It
fixes a concrete ordered record model together with 256-bit identifiers
that contribute to the composable summary carried by contiguous ranges.

Negentropy is already relevant at the systems level. In Nostr, NIP-77
specifies a Negentropy-based synchronization wrapper for both
client$\leftrightarrow$relay and relay$\leftrightarrow$relay exchange, and
practical implementations exist in software such as the \texttt{strfry}
relay, the \texttt{nostria-relay} project, and the
\texttt{@nostr-dev-kit/sync} package
\cite{negentropy,nip77,strfry,nostriarelay,ndksync}. It has also been
explored as a basis for Waku Sync
\cite{waku_sync,waku_sync_forum}.
This emerging adoption in Web3-related
systems suggests that range-based set reconciliation may become a 
common foundational synchronization primitive in future protocol stacks.

Negentropy is therefore the natural protocol-level reference for our study:
it is practical, it instantiates the abstract model developed in this paper,
and improvements to its storage layer have direct systems relevance.
Existing LMDB-based implementations realize its required range summaries and
order-statistics queries through an auxiliary tree with cached aggregates;
this serves as the baseline for comparison with the in-tree aggregate design
studied here.

\section{Model and Abstract Oracle}
\label{sec:model}

We formalize RBSR through an abstract storage oracle. The goal of this
section is to isolate the exact local capabilities that range-based
reconciliation requires: range aggregation, order statistics, and explicit
enumeration on small residual ranges.

\subsection{Ordered reconciliation instances}

We begin with the ordered-set model underlying range-based reconciliation.

\begin{definition}[Ordered universe]
Let $(\U,\preceq)$ be a totally ordered universe, and let $\prec$ denote the
associated strict order.
\end{definition}

\begin{definition}[Replica state]
A replica state is a finite subset $X \subseteq \U$. 
For admissible bounds $l,u$ delimiting a contiguous slice of \(\U\), define
\[
X \cap \range{l}{u} := \{x \in X \mid l \preceq x \prec u\}.
\]
\end{definition}
To keep the interval notation flexible at the application level, we allow
the bound values \(l\) and \(u\) to be either elements of \(U\) or external
sentinel/cut values, provided that \(\range{l}{u}\) denotes a contiguous
subset of \(U\).
Typical examples are application-defined outer endpoints, timestamp cut
values, or context-dependent sentinels such as \(-\infty\) and \(+\infty\).

\begin{definition}[Reconciliation instance]
An ordered reconciliation instance is a pair $(X,Y)$ of finite subsets of
$\U$. Its global symmetric difference is
\[
\Diff(X,Y) := (X \setminus Y) \cup (Y \setminus X),
\]
and its range-local symmetric difference is
\[
\Diff_{l,u}(X,Y)
:= \Diff\bigl(X \cap \range{l}{u},\; Y \cap \range{l}{u}\bigr).
\]
\end{definition}

The \emph{reconciliation task} is to compute the global symmetric difference
$\Diff(X,Y)$, or, when restricted to a half-open range $\range{l}{u}$,
the local symmetric difference $\Diff_{l,u}(X,Y)$.
From the perspective of replica $X$, the set of items it \emph{has} but
$Y$ lacks is $X \setminus Y$, while the set of items it \emph{needs} from
$Y$ is $Y \setminus X$.

\subsection{Composable range aggregates and comparison maps}

RBSR compares contiguous ranges by means of compact composable summaries. We
first define the algebra carried by the elements, then lift it to whole
ranges.

\begin{definition}[Element-summary monoid]
\label{def:summary-monoid}
Let $(M,\oplus,0_M)$ be a monoid and let $\phi : \U \to M$. For a finite
ordered subset $S = \{x_1 \prec \cdots \prec x_k\}$, define its summary
\[
\Sigma(S) := \phi(x_1) \oplus \cdots \oplus \phi(x_k),
\qquad
\Sigma(\varnothing) := 0_M.
\]
\end{definition}

Because $\oplus$ is associative, the summary of a range can be obtained by
combining the summaries of subranges. This is the algebraic property that
makes caching possible.

RBSR, however, needs not only a composable summary value but also the
cardinality of the range being summarized, since balanced refinement is
defined by relative rank. We therefore bundle both quantities into a single
aggregate.

\begin{definition}[Range aggregate]\label{def:rangeaggregate}
Given an element-summary monoid $(M,\oplus,0_M)$, define the aggregate
monoid
\[
\mathcal{A} := \bigl(\mathbb{N}\times M,\; \weightsum,\; (0,0_M)\bigr),
\]
with
\[
(c_1,m_1)\weightsum(c_2,m_2)
:= (c_1+c_2,\; m_1 \oplus m_2).
\]
For a finite set $S$, its aggregate is
\[
A(S) := \bigl(|S|,\Sigma(S)\bigr) \in \mathcal{A}.
\]
\end{definition}

The aggregate $A(S)$ is the fundamental range object in the model: it is
composable, cacheable, and already contains the two quantities that RBSR
needs locally.

\begin{definition}[Range comparison map]
Let \(\mathcal{A}\) be the aggregate space from Definition~\ref{def:rangeaggregate}.
A \emph{range comparison map} is a protocol-specific function
\[
\fp : \mathcal{A} \to F
\]
into some comparison space \(F\).

For a queried range \(\range{l}{u}\), the protocol compares
\[
\fp\!\bigl(A(X \cap \range{l}{u})\bigr)
\quad\text{and}\quad
\fp\!\bigl(A(Y \cap \range{l}{u})\bigr).
\]
If these values agree, the protocol may treat the queried range as matched;
otherwise it continues with refinement or explicit enumeration.
\end{definition}

This separates two roles in the protocol.
The aggregate \(A(\cdot)\) is the composable storage-level object maintained
by the backend, combining the range cardinality and the summary value needed
locally by RBSR.
The comparison map \(\fp\) is instead part of the protocol layer: it
turns the aggregates into the values that are actually compared during reconciliation.

The purpose of the present paper is to identify and realize the storage-side
aggregate object \(A(\cdot)\) and the operations needed to maintain and query it efficiently.
The specific design or analysis of the protocol-level comparison map is not
part of the RSOS abstraction developed here; it is taken "as-is" from the concrete
RBSR instantiation under study.

An illustrative example of the aggregate object is presented in
Appendix~\ref{sec:example:rangeaggr}.

\subsection{Order statistics}

The second requirement of RBSR is the ability to cut a range by
\emph{cardinality}, not only by key value. This is why order statistics are
part of the required interface.

\begin{definition}[Rank and select]
Let $X = \{x_0 \prec x_1 \prec \cdots \prec x_{n-1}\}$. 
For a bound value $z$, define
\[
\Rank_X(z) := |\{x \in X \mid x \prec z\}|.
\]
For $r \in \{0,\dots,n-1\}$, define
\[
\Select_X(r) := x_r.
\]
\end{definition}

With these order primitives, we can define how to proceed recursively in partitioning ranges.

\begin{definition}[Balanced $b$-partition]
Let $m := |X \cap \range{l}{u}|$. A balanced $b$-partition of
$\range{l}{u}$ in $X$ is a sequence of boundary values
\[
l = c_0 \preceq c_1 \preceq \cdots \preceq c_b = u
\]
such that the induced subranges
\[
\range{c_0}{c_1},\; \range{c_1}{c_2},\; \dots,\; \range{c_{b-1}}{c_b}
\]
form $b$ consecutive parts, each containing either
$\lfloor m/b \rfloor$ or $\lceil m/b \rceil$ elements of
$X \cap \range{l}{u}$.

Equivalently, for each $j=0,\dots,b$, the cut $c_j$ may be chosen so that
exactly $\left\lfloor \frac{j\,m}{b} \right\rfloor$ elements of
$X \cap \range{l}{u}$ are strictly smaller than $c_j$, with consecutive
boundaries allowed to coincide when needed.
\end{definition}

In implementations, coincident consecutive boundaries may be normalized/removed away, 
so the actual recursive call set is the family of \emph{nonempty} subranges 
induced by the balanced partition.
Balanced recursive refinement is naturally implemented through rank and
select: rank gives positions inside a range, and select materializes the
corresponding cut values.

\subsection{Range-summarizable order-statistics stores}

We can now state the abstract storage interface required by RBSR.

\begin{definition}[RSOS]
A \emph{Range-Summarizable Order-statistics Store} over $\U$ is a
dynamic data structure representing a finite set $X \subseteq \U$ and
supporting
\begin{align*}
\mathsf{size}() &\to |X|,\\
\Aggregate(l,u) &\to A(X \cap \range{l}{u}),\\
\Rank(z) &\to \Rank_X(z),\\
\Select(r) &\to \Select_X(r),\\
\Enumerate(l,u) &\to \text{the ordered contents of } X \cap \range{l}{u},
\end{align*}
together with $\Insert(x)$ and $\Delete(x)$ to update $X$.
\end{definition}

The output of $\Aggregate(l,u)$ already combines the two local quantities
needed by RBSR: cardinality, used for balanced partitioning, and composable
summary, used to compare corresponding ranges. The remaining operations
provide the navigation and fallback enumeration needed by the protocol.

\section{RBSR over RSOS}
\label{sec:protocol}

We now formulate RBSR over the abstract RSOS interface, independently of any
particular storage representation.

Let $O_X$ and $O_Y$ be RSOS oracles representing two replica states
$X,Y \subseteq \U$ over the same ordered universe. The protocol maintains a
finite family of pairwise disjoint \emph{active ranges}. Initially, this
family consists of one application-defined outer range covering the part of
the universe to be reconciled. During execution, each active range is either
resolved, explicitly enumerated, or replaced by a balanced family of child
ranges. Thus the active ranges always form a partition of the still
unresolved portion of the reconciliation instance.

We describe one responder step from the point of view of peer \(X\),
replying to a query sent by peer \(Y\). For each queried range
\(\range{l}{u}\), peer \(Y\) provides its local comparison value
\[
f_Y = \fp\bigl(O_Y.\Aggregate(l,u)\bigr).
\]
Peer \(X\) computes the local aggregate of \(X \cap \range{l}{u}\) and then
does one of three things:
\begin{enumerate}
  \item if the comparison values match, it returns \textsc{Skip};
  \item if they do not match and \( |X \cap \range{l}{u}| \le t \), it
  returns \textsc{IdList} with the explicit ordered contents of the range;
  \item otherwise, it returns the comparison values of a \(b\)-balanced
  partition of \(\range{l}{u}\).
\end{enumerate}

\begin{algorithm}[t]
\caption{Responder step at peer $X$}
\label{alg:rbsr-step}
\begin{algorithmic}[1]
\Procedure{Respond}{$O_X,l,u,f_Y,b,t$}
    \State $a_X \gets O_X.\Aggregate(l,u)$
    \State $(m_X,\sigma_X) \gets a_X$
    \State $f_X \gets \fp(a_X)$
    \If{$f_X = f_Y$}
        \State \Return \textsc{Skip}$(\range{l}{u})$
    \EndIf
    \If{$m_X \le t$}
        \State \Return \textsc{IdList}$(\range{l}{u}, O_X.\Enumerate(l,u))$
    \EndIf
    \State $C \gets \Call{SplitByRank}{O_X,l,u,b}$
    \State \Return $\textsc{Split}\bigl([\bigl(\range{C_j}{C_{j+1}},\, \fp(O_X.\Aggregate(C_j,C_{j+1}))\bigr)]_j\bigr)$
\EndProcedure
\end{algorithmic}
\end{algorithm}

Algorithm \ref{alg:rbsr-step} sketches the response procedure.
The partitioning procedure, listed in Algorithm~\ref{alg:split-by-rank},
uses the count component of the aggregate to choose
cuts by relative rank inside the queried range.
Function $\Call{NormalizeCuts}{C}$ removes consecutive duplicate cuts, so that
zero-width child ranges are dropped. The remaining child ranges are still
pairwise disjoint and their union is the parent range $\range{l}{u}$.

\begin{algorithm}[t]
\caption{Balanced splitting by rank}
\label{alg:split-by-rank}
\begin{algorithmic}[1]
\Procedure{SplitByRank}{$O,l,u,b$}
    \State $(m,\_) \gets O.\Aggregate(l,u)$
    \If{$m=0$}
        \State \Return $[l,u]$
    \EndIf
    \State $r_0 \gets O.\Rank(l)$
    \State $C \gets [l]$
    \For{$j=1$ to $b-1$}
        \State $q_j \gets \left\lfloor \tfrac{j\,m}{b} \right\rfloor$
        \State append $O.\Select(r_0 + q_j)$ to $C$
    \EndFor
    \State append $u$ to $C$
    \State \Return \Call{NormalizeCuts}{$C$}
\EndProcedure
\end{algorithmic}
\end{algorithm}

A complete protocol round consists of each peer applying
Algorithm~\ref{alg:rbsr-step} to the active ranges it is asked to answer.
A range is resolved immediately in the \textsc{Skip} case.
In the \textsc{IdList} case, the range is resolved exactly once the receiving
peer compares the received ordered list with its own local ordered contents of
the same range.
This exact comparison may itself require a local call to \(\Enumerate(l,u)\) on the
receiving side.
In the \textsc{Split} case, the parent range is removed from the active family
and replaced by the returned child ranges.

For fixed parameters $b \ge 2$ and $t \ge 1$, repeated refinement is finite:
every nonterminal mismatching range is replaced by proper child subranges,
and all cut values are drawn from a finite set consisting of endpoint values
and keys already present in one of the replicas. Hence the protocol
eventually reaches only \textsc{Skip} or \textsc{IdList} leaves, and then terminates.

\begin{proposition}[Protocol correctness under sound skip decisions]
\label{prop:correctness}
Assume that whenever the protocol returns \textsc{Skip} on a queried range
\(\range{l}{u}\), one has
\[
X \cap \range{l}{u} = Y \cap \range{l}{u}.
\]
Then RBSR computes the exact symmetric difference \(\Delta(X,Y)\).
\end{proposition}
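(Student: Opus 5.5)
The proof rests on one invariant: the active ranges always partition the unresolved part of the outer range, and the symmetric difference decomposes over any such partition. We then account for every leaf of the refinement tree. The protocol's output is the union of the differences found at \textsc{IdList} leaves, where each leaf contributes the difference of the two explicit lists. We must show this union equals \(\Diff(X,Y)\), restricted to the outer range \(\range{l_0}{u_0}\), which is taken to cover the part of the universe being reconciled.

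\emph{Step 1 (decomposition).} If \(\range{l}{u}\) is the disjoint union of \(\range{c_0}{c_1},\dots,\range{c_{k-1}}{c_k}\), then
\[
\Diff_{l,u}(X,Y)=\bigcup_{j}\Diff_{c_j,c_{j+1}}(X,Y),
\]
and the union on the right is disjoint. This holds because intersection with a fixed subrange commutes with set difference. An element of \(X\setminus Y\) lying in \(\range{l}{u}\) lies in exactly one child and is absent from \(Y\) there, and the argument for \(Y\setminus X\) is symmetric. We apply this to the \textsc{Split} case. By the remark after Algorithm~\ref{alg:split-by-rank}, the child ranges produced by \textsc{NormalizeCuts} are pairwise disjoint with union \(\range{l}{u}\). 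Removing zero-width ranges is harmless, since an empty range contributes an empty difference.

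\emph{Step 2 (induction on the refinement tree).} By the termination argument preceding the proposition, the process is finite, so the refinement tree is finite and every leaf is a \textsc{Skip} or \textsc{IdList} leaf. We induct on the tree from the leaves up and prove that, for each node \(\range{l}{u}\), the output collected at leaves below it equals \(\Diff_{l,u}(X,Y)\).
\begin{itemize}
\item At a \textsc{Skip} leaf, the hypothesis gives \(X\cap\range{l}{u}=Y\cap\range{l}{u}\), so \(\Diff_{l,u}=\varnothing\), which matches the empty output.
\item At an \textsc{IdList} leaf, the receiving peer holds the full ordered contents of one side, namely \(O_X.\Enumerate(l,u)=X\cap\range{l}{u}\). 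It compares this with its own \(\Enumerate(l,u)\), so it computes \(\Diff\) of the two restricted sets exactly, which is \(\Diff_{l,u}(X,Y)\) by definition.
\item At a \textsc{Split} node, the claim follows from the induction hypothesis on the children together with Step 1.
\end{itemize}
Applying the claim at the root gives \(\Diff_{l_0,u_0}(X,Y)=\Diff(X,Y)\).

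The main subtlety is Step 2's \textsc{IdList} case, together with the bookkeeping of which peer acts at each node. The messages alternate between peers, and a \textsc{Split} response carries comparison values that the other peer then answers. So a node of the tree may be processed by either peer, and the roles of \(X\) and \(Y\) in Algorithm~\ref{alg:rbsr-step} swap. Because \(\Diff\) is symmetric, the argument is unaffected by which peer answers. I would make this explicit by stating the invariant for unordered pairs \(\{X,Y\}\).

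It also matters that the recorded result is the unordered symmetric difference. A peer-relative has/need split is obtained by intersecting with its own set, so that split is also correct. Finally, I would note that the \emph{mismatch} branches never need any assumption on \(\fp\). A false mismatch only causes extra refinement or enumeration, never an error. Soundness of \textsc{Skip} is therefore the only hypothesis required.
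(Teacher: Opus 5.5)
Your proposal is correct and follows essentially the same argument as the paper. Both proofs do a case analysis at each active range (\textsc{Skip}, \textsc{IdList}, \textsc{Split}), use the disjoint-union decomposition of the local symmetric difference over the split children, and induct over the finite refinement tree. Your extra points on peer-role symmetry, the has/need split, and the harmlessness of false mismatches are sound details that the paper leaves implicit.
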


\begin{proof}
Consider any active range \(\range{l}{u}\).

If the protocol returns \textsc{Skip}, then by hypothesis
\[
X \cap \range{l}{u} = Y \cap \range{l}{u},
\]
so the local symmetric difference on that range is empty.

If the protocol returns \textsc{IdList}, then one peer explicitly sends the
ordered contents of its local subset on \(\range{l}{u}\). The receiving
peer compares that list with its own ordered contents on the same range and
thereby recovers the exact local symmetric difference on \(\range{l}{u}\).

If the protocol returns \textsc{Split}, then the child ranges are pairwise
disjoint and their union is exactly the parent range. Therefore the local
symmetric difference on the parent range is the disjoint union of the local
symmetric differences on the child ranges. By induction over the finite
refinement tree, every leaf range is resolved exactly, and therefore their
union yields the exact global symmetric difference \(\Delta(X,Y)\).
\end{proof}

Proposition~\ref{prop:correctness} makes explicit the only protocol-level
assumption needed for exact reconciliation. Namely, whenever the protocol
stops recursion with \textsc{Skip}, that decision must be sound for the
queried range. 
The storage-side requirements are independent of how a concrete protocol
realizes that test: RBSR still needs a backend that can aggregate arbitrary
ordered ranges, map keys to ranks, select cut points by relative position,
and enumerate small residual ranges exactly.

\section{Augmented B\textsuperscript{+}-Trees as a Natural Realization}
\label{sec:bptree}

We now move from the abstract RSOS interface to a concrete data-structural
realization. Prior work already observes that composable monoidal summaries
fit naturally with higher-degree search trees
\cite{meyer2023rbsr}. This motivates page-oriented balanced
trees, and in particular B\textsuperscript{+}-trees, as a natural persistent
realization.

Let $X=\{x_0 \prec x_1 \prec \dots \prec x_{n-1}\}$ be the ordered set stored
in the leaves of a balanced B\textsuperscript{+}-tree.

\begin{definition}[Aggregate-augmented B\textsuperscript{+}-tree]
An aggregate-augmented B\textsuperscript{+}-tree is a
B\textsuperscript{+}-tree in which every child reference of an internal node
stores the cached aggregate
\[
A(S)=(|S|,\Sigma(S)),
\]
where $S \subseteq X$ is the set of descendant items in that child subtree.
\end{definition}

Because the product aggregate $A(S)=(|S|,\Sigma(S))$ forms a monoid,
aggregates over larger subtrees are obtained by combining child aggregates,
while insertions and deletions can maintain these cached values by
propagating local changes along the modified search path and through the
possible upward rebalancing cascade of the tree.

This annotation realizes the RSOS operations directly.

For $\Rank(z)$, one performs the usual search for the lower bound of $z$.
Along the root-to-leaf path, whenever the search descends through a child,
the counts of all preceding sibling subtrees are added to an accumulator.
The final sum is exactly the number of stored keys strictly smaller than
$z$.

For $\Select(r)$, one descends from the root while maintaining a residual
rank. At each internal node, let the children from left to right represent
subsets $S_0,\dots,S_{k-1}$ with cached counts $|S_0|,\dots,|S_{k-1}|$.
Choose the unique child index $i$ such that
\[
\sum_{j<i}|S_j| \;\le\; r \;<\; \sum_{j\le i}|S_j|,
\]
subtract $\sum_{j<i}|S_j|$ from the residual rank, and recurse into child
$i$. At the leaf, the residual rank identifies the desired key.

For $\Aggregate(l,u)$, one performs the standard two-boundary range walk.
Subtrees fully contained in $\range{l}{u}$ contribute their cached aggregate
directly, while only the two boundary regions require further descent. Thus
range aggregation is obtained by combining a small number of cached subtree
aggregates with the aggregates of the boundary fragments.

For $\Enumerate(l,u)$, one seeks the first leaf position at or after $l$ and
then scans forward in leaf order until reaching $u$. Since
B\textsuperscript{+}-tree leaves are maintained in key order, this returns
the contents of $X \cap \range{l}{u}$ in sorted order.

Thus a single aggregate-augmented B\textsuperscript{+}-tree realizes all the
operations required by RSOS: the same tree stores the ordered set, supports
rank/select navigation through subtree counts, and answers range aggregates
through cached composable metadata.

\begin{theorem}[RSOS realization theorem]
\label{thm:bptree-rsos}
Let \(X\) be stored in a balanced aggregate-augmented
B\textsuperscript{+}-tree of height \(h=\Theta(\log_B n)\), where
\(n=|X|\) and \(B\) is the maximum number of child references that fit in
one internal page.
Assume the standard bounded-page-size model, so \(B=O(1)\) with respect
to \(n\), and that basic operations (aggregate-field combination, copying, 
fixed-size summaries, ...) take \(O(1)\) time.
Then this tree realizes an RSOS, and the supported operations satisfy:
\begin{itemize}
    \item \(\Rank(z)\) and \(\Select(r)\) take \(O(h)\) time;
    \item \(\Aggregate(l,u)\) takes \(O(Bh)\) time, hence \(O(h)\) time
    under bounded page size;
    \item \(\Enumerate(l,u)\) takes \(O(h+k)\) time, where
    \(k = |X \cap \range{l}{u}|\);
    \item \(\Insert(x)\) and \(\Delete(x)\) affect at most one
    root-to-leaf search path together with at most one split, merge, or
    redistribution per level, so maintaining the cached aggregates costs
    \(O(Bh)\), hence \(O(h)\), time under bounded page size.
\end{itemize}
\end{theorem}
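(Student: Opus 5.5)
The proof splits into two parts: correctness, i.e.\ that each procedure returns the value prescribed by the RSOS definition, and the cost bounds. Both rest on one invariant. Every child reference stores \(A(S)\) for the set \(S\) of leaf keys below it, and the children of a node partition that node's key set into consecutive intervals in \(\preceq\)-order, ordered left to right. Since \(\weightsum\) is associative, the aggregate of any union of consecutive subtrees equals the \(\weightsum\)-product of their cached aggregates in left-to-right order. The count component is additive even without commutativity.

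\textbf{Rank and Select.} For \(\Rank(z)\), I would argue by induction on depth that after descending \(d\) levels the accumulator equals the number of keys lying in subtrees strictly left of the current node. The lower-bound search descends into the child whose interval contains the first key \(\succeq z\). Every key in a preceding sibling is therefore \(\prec z\), and no key \(\prec z\) lies to the right of the current node. At the leaf, we add the in-leaf offset. Each level costs \(O(B)\) when siblings are scanned, so the total is \(O(Bh)=O(h)\); storing prefix sums would reduce this to a binary search. \(\Select(r)\) is symmetric. The invariant is that the target key is the residual-rank-th key of the current subtree. The choice of the index \(i\) preserves it, and the \(O(Bh)\) bound follows in the same way.

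\textbf{Aggregate and Enumerate.} For \(\Aggregate(l,u)\), I would run two root-to-leaf searches, for the lower bound of \(l\) and of \(u\), and let \(v\) be the deepest node common to both paths. At each node on the left path below \(v\), every right sibling of the path child is fully contained in \([l,u)\). Symmetrically, at each node on the right path, every left sibling of the path child is fully contained. At \(v\) itself, the children strictly between the two path children are fully contained. These subtrees, together with the two partial leaves, partition \(X\cap[l,u)\) into consecutive pieces. Combining their cached aggregates in the order left fragment, left-path siblings bottom-up, the middle at \(v\), right-path siblings top-down, then the right fragment gives \(A(X\cap[l,u))\) by associativity. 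At most \(O(B)\) aggregates are read per level on each path, which gives \(O(Bh)\). \(\Enumerate(l,u)\) uses one \(O(h)\) descent followed by a leaf-chain scan. The scan emits \(k\) keys and visits at most \(\lceil k/\text{min-fill}\rceil+1\) leaves, so it costs \(O(h+k)\). This relies on the B\(^+\)-tree occupancy invariant, which guarantees that each leaf other than possibly the first and last contributes \(\Omega(B)\) keys.

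\textbf{Updates.} This is the step I expect to require the most care. I would first fix the standard insertion and deletion algorithm: one search path, with each level undergoing at most one split, or one merge or redistribution with a single sibling. The claim to establish is that the only child references whose cached aggregates change are those pointing to nodes on the search path, or to nodes created, removed or modified by a split, merge or redistribution at some level. Every other subtree keeps the same key set, so its cached aggregate stays valid. For each touched node, the new aggregate of every reference pointing to it is recomputed bottom-up as the \(\weightsum\)-fold of that node's at most \(B\) child aggregates, or of its leaf entries, at cost \(O(B)\). Each level touches \(O(1)\) nodes, so the total is \(O(Bh)\). The subtle point is the ordering: recomputation must proceed leaves-upward, after the structural change at each level, so that a parent always folds already-correct child values. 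A redistribution between siblings also changes the parent's references to both siblings, and both must be updated. Finally, the height stays \(\Theta(\log_B n)\) by the standard balance argument, and \(B=O(1)\) yields all the stated \(O(h)\) forms.
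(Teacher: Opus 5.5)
Your proposal is correct and follows the same route as the paper's proof: left-sibling count accumulation for Rank, residual-rank descent for Select, the two-boundary walk combining cached subtree aggregates for Aggregate, one descent plus a leaf scan for Enumerate, and $O(B)$ recomputation per structurally touched node along a single path for updates. Beyond the paper, you spell out three details it leaves implicit: the left-to-right combination order that matters when $\oplus$ is not commutative, the per-level $O(B)$ sibling scan that makes Rank and Select really $O(Bh)$ before bounded page size is invoked, and the leaves-upward order of aggregate refreshes.
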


\begin{proof}
For \(\Rank(z)\), the search follows one root-to-leaf path.
At each internal node, the counts of the sibling subtrees strictly to the
left of the chosen child are added to an accumulator.
Since exactly one child is followed per level, the total cost is \(O(h)\).

For \(\Select(r)\), one again follows a single root-to-leaf path.
At each internal node, the cached subtree counts determine the unique child
whose cumulative count interval contains the residual rank.
This also costs \(O(h)\).

For \(\Aggregate(l,u)\), only the two boundary search paths need explicit
descent.
Every subtree strictly between those two paths is fully covered by the
query range and contributes its cached aggregate directly.
At each of the \(h\) levels, at most \(O(B)\) child entries are inspected,
so the total work is \(O(Bh)\), which reduces to \(O(h)\) in the
bounded-page-size model.

For \(\Enumerate(l,u)\), one root-to-leaf descent finds the first relevant
leaf position, after which the output is produced by sequential scanning in
leaf order.
This yields \(O(h+k)\) time.

For updates, insertion or deletion changes one root-to-leaf path and may
trigger the usual rebalancing operations as the modification propagates
upward.
Thus at most \(O(h)\) nodes are structurally affected.
At each such node, refreshing the cached aggregate information requires
recomputing only a bounded number of child-derived summary fields, i.e.,
\(O(B)\) work per node.
Hence the total update-maintenance cost is \(O(Bh)\), which is \(O(h)\)
under bounded page size.
\end{proof}

This theorem is the storage-theoretic core of the paper. If a system already
stores its ordered state in a balanced B\textsuperscript{+}-tree, then adding
subtree counts together with fixed-size composable range aggregates yields
precisely the RSOS structure required by RBSR, without introducing a separate
auxiliary index.

\subsection{Bounds for reconciliation over RSOS B\textsuperscript{+}-trees}

Once reconciliation is expressed through the RSOS abstraction,
the cost of a reconciliation can be analyzed directly in terms of
the underlying B\textsuperscript{+}-tree operations.
We can derive per-step and per-reconciliation bounds:
\begin{enumerate}
    \item \textbf{Per-step cost.}
    In an RSOS B\textsuperscript{+}-tree, each reconciliation step costs
    $O(h)$, except for explicit enumeration, which costs $O(h+k)$.
    \\
    (derivation in Lemma~\ref{lem:single-step-cost} in the Appendix).

    \item \textbf{Execution-sensitive responder-side total cost.}
    Suppose the reconciliation tree contains $Q$ queried ranges and
    returns a total of $K$ explicitly enumerated items. Then
    \[
    T_{\mathrm{loc}} = O(bQh + K),
    \]
    and, for fixed $b$ and $t$, this simplifies to
    \[
    T_{\mathrm{loc}} = O(Qh).
    \]
    (derivation in Theorem~\ref{thm:execution-sensitive-cost} in the Appendix).
\end{enumerate}

\section{Implementation and System Models}
\label{sec:systems}

We now map the abstract RSOS model to a concrete two-layer design.
The protocol layer fixes the ordered record model together with the concrete
range-comparison rule used during reconciliation; the storage layer provides
the RSOS operations required by that protocol.

\subsection{Protocol layer}
At the protocol layer, we consider Negentropy
\cite{negentropy,logperiodic2023rbsr}, a practical realization of RBSR.
In Negentropy, each record consists of a timestamp and a 256-bit identifier; 
records are ordered lexicographically by \((\text{timestamp},
\text{identifier})\); this gives a total order while keeping timestamps as
the primary coordinate along which reconciliation ranges are sliced.
Identifiers contribute to the composable range summary.
For identifiers
\(\mathrm{id}_1,\ldots,\mathrm{id}_k\), Negentropy defines
\[
  \Sigma(\mathrm{id}_1,\ldots,\mathrm{id}_k) = 
    \mathrm{id}_1 + \ldots + \mathrm{id}_k \quad (\mathrm{mod}~2^{256}).
\]
Since modular addition is associative, the
identifier contribution forms an element-summary monoid. 

We write \(\fpne\) for Negentropy's concrete comparison value over range
aggregates. Concretely, Negentropy encodes the range aggregate
\[
A(S \cap \range{l}{u})
=
\bigl(|S \cap \range{l}{u}|,\; \Sigma(S \cap \range{l}{u})\bigr),
\]
hashes that encoding with SHA-256, and uses the first \(128\) bits as the
value compared by the protocol.
Therefore, Negentropy's concrete comparison rule \(\fpne\) should be read as
probabilistically sound rather than information-theoretically exact.
A false \textsc{Skip} may arise whenever two unequal ranges yield the same
comparison value, either because the underlying aggregate \(A(\cdot)\) is not
injective as a set summary, or because distinct encodings of aggregates
collide after hashing and truncation to \(128\) bits.
Proposition~\ref{prop:correctness} therefore applies to the
abstract protocol under a sound-skip assumption, while concrete Negentropy
adopts a tradeoff between probabilistic soundness and efficiency.
A full end-to-end collision analysis of \(\fpne\) is outside the scope of the
present paper. Here we keep the protocol rule fixed and focus on the storage
operations required to compute its input aggregate efficiently.


\subsection{Storage layer}

At the storage layer, the question is how to efficiently realize the RSOS oracle. 
The main design alternatives are: a materialized array, a separate auxiliary
tree layered over storage, or an in-tree aggregate design in which the
ordered storage structure itself maintains the counts and composable
summaries required by reconciliation.


The Anti-Entropy Lightning Memory-mapped Database (AELMDB)
\cite{aelmdb} is a newly developed extension of the open-source engine
LMDB~\cite{lmdb} that realizes this third design, and it is a major
software contribution of this paper.
AELMDB is not merely a benchmark-specific backend: it is a reusable LMDB
extension that exposes aggregate-aware ordered-set functionality to
applications while preserving LMDB's overall page-oriented architecture.
The contribution here is therefore the realization of the RSOS view
inside a persistent engine: the required metadata are integrated into the
on-disk page format, maintained by the update path, and surfaced through
a concrete API for range aggregates, rank/select operations, and
window-relative subranges.

AELMDB augments LMDB's B\textsuperscript{+}-tree by storing aggregate
metadata directly in branch pages, so that subtree counts and
range-summary values are maintained inside the same tree that stores the
records.
Aggregate support is selected per-database, with flags such as
\texttt{MDB\_AGG\_ENTRIES} and \texttt{MDB\_AGG\_KEYS} for counting
semantics, and \texttt{MDB\_AGG\_HASHSUM} for a fixed-size additive
summary.

\medskip

\begin{figure}[t]
\centering
\footnotesize
\setlength{\fboxsep}{5pt}
\begin{tabular}{l}
\textbf{meta page 0/1} \\
\fbox{\parbox{0.9\linewidth}{
\texttt{MDB\_meta} \; contains \; \texttt{mm\_dbs[FREE], mm\_dbs[MAIN]} \\
\texttt{MDB\_db}: usual LMDB tree metadata $+$ aggregate totals/configuration \\
\hspace*{1em}\texttt{md\_entries}, \underline{\texttt{md\_keys}}, \underline{\texttt{md\_hash\_offset}},
\underline{\texttt{md\_hashsum}}, \texttt{md\_root}
}} \\[0.6em]

\textbf{branch page} \\
\fbox{\parbox{0.9\linewidth}{
page header with \texttt{P\_AGG\_*} bits \;+\; pointer array \;+\; branch nodes \\
branch node layout: \texttt{[child page\_num | \underline{aggregates} | separator key]} \\
\underline{\texttt{aggregates}} := \underline{\texttt{[entries?][keys?][hashsum?]}}
}} \\[0.6em]

\textbf{leaf page, optional overflow pages} \\
\fbox{\parbox{0.9\linewidth}{
usual LMDB key/value records (payload stored here)
}}
\end{tabular}
\caption{Simplified AELMDB on-disk page layout. Aggregate metadata is stored in the
per-database descriptor and in branch-node prefixes, while leaf pages keep the
records themselves. Extensions of the LMDB format are \underline{underlined}.}
\label{fig:aelmdb-layout}
\end{figure}

AELMDB keeps LMDB's overall file organization (meta pages, branch pages,
leaf pages, and overflow pages) but extends the internal
B\textsuperscript{+}-tree layout with aggregate state (Figure \ref{fig:aelmdb-layout}). 
At the database level,
the persistent \texttt{MDB\_db} descriptor carries not only the usual root and
page counters, but also the aggregate totals and configuration needed by the
engine (notably the total entry count, optional distinct-key count, the hash
slice offset, and the running hashsum). Inside the tree, only branch pages are
modified: their headers record which aggregate components are active, and each
branch node stores, immediately before its separator key, a compact aggregate
prefix describing the referenced child subtree. In simplified form, an
aggregate-enabled branch node is
\texttt{[child pgno | aggregates | separator key]}, where the aggregates form a sequence of optional fields \texttt{[entries?][keys?][hashsum?]}. 
Leaf pages still store the actual key/value records. 
Thus AELMDB realizes RSOS by embedding the reconciliation
metadata directly into the same persistent tree that stores the ordered set,
rather than by maintaining a separate auxiliary structure.

\subsection{RSOS realization}

In AELMDB, the function \(\phi : \U \to M\) from
Definition~\ref{def:summary-monoid} is realized by extracting from each
record a fixed-size byte slice at a designated offset from its key or
value. The summary component \(\Sigma\) of a range is then obtained by
modular addition of the extracted slices over all records in that range,
while the count component is maintained by the aggregate metadata itself.
Hence AELMDB realizes the full range aggregate
\[
A(S)=(|S|,\Sigma(S))
\]
required by RSOS.

The correspondence with the abstract RSOS operations is direct:
\begin{align*}
\mathsf{size}() &\rightarrow \mathtt{mdb\_agg\_totals},\\
\Aggregate(l,u) &\rightarrow \mathtt{mdb\_agg\_range},\\
\Rank(z) &\rightarrow \mathtt{mdb\_agg\_rank},\\
\Select(r) &\rightarrow \mathtt{mdb\_agg\_select},\\
\Enumerate(l,u) &\rightarrow \text{iteration from } \mathtt{mdb\_agg\_cursor\_seek\_rank},\\
\Insert(x) &\rightarrow \mathtt{mdb\_put},\quad
\Delete(x) \rightarrow \mathtt{mdb\_del}.
\end{align*}

When keys are formed by a \((\text{timestamp}, 256\text{-bit identifier})\)
pair and \(\phi\) extracts the identifier slice, 
the storage-level summary algebra matches Negentropy's record model exactly:
lexicographic \((\text{timestamp}, \text{identifier})\) order defines the record order,
while identifiers provide the additive summary component.
In this configuration, AELMDB is a concrete realization of the
aggregate-augmented B\textsuperscript{+}-tree of
Section~\ref{sec:bptree}, instantiated to the aggregate algebra used by
Negentropy.

\subsection{Window Subranges}

AELMDB also provides a \emph{window subrange} interface tailored to the
recursive structure of RBSR. When many successive queries fall within the
same outer key range, an \texttt{MDB\_agg\_window} stores once the mapping
from that outer range to its absolute entry-rank interval. Subsequent
operations can then be expressed in ranks relative to that window, rather
than recomputing the outer bounds and their absolute positions at every
step.

Many aggregate and lower-bound queries in reconciliation loops are issued under stable outer bounds, so re-deriving the same key-range-to-rank mapping each time would
be redundant work. By reusing the stored window interval, AELMDB can answer
nested subrange aggregates and split-position queries more directly, and
thus improve locality and reduce repeated navigation overhead. This is not
part of the minimal RSOS abstraction, but a systems-level optimization for
the common RBSR pattern of many subrange queries within the same outer
range.

\section{Evaluation}
\label{sec:evaluation}

The purpose of the evaluation is to isolate the effect of
\emph{storage realization}, not to compare different reconciliation
protocols. Across all experiments, the protocol-level task is held fixed:
the same Negentropy/RBSR reconciliation workload is executed over different
backends implementing the same abstract oracle interface. We compare four
realizations: the baseline \texttt{BTreeLMDB}, the new aggregate-enabled
\texttt{AELMDB}, an ablation
\texttt{NoWndAELMDB}, and an in-memory \texttt{Vector}
backend used as an additional reference \cite{aelmdb_bench}.

The baseline \texttt{BTreeLMDB} realizes the required reconciliation
operations through a separate auxiliary tree stored inside LMDB. As a
consequence, range aggregates and rank-based navigation are not obtained
directly from LMDB's own page metadata; they require repeated traversal of
this auxiliary LMDB-backed tree, with per-node accumulation and weaker
locality. In effect, the hot reconciliation path is executed through a
``B-tree inside a B-tree.'' By contrast, \texttt{AELMDB} realizes the
same abstract operations using aggregate metadata stored directly in LMDB
branch pages, together with window subrange for repeated refinement.
The backend \texttt{NoWndAELMDB} uses the same aggregate-enabled
storage representation as \texttt{AELMDB}, but without the window subrange helper path; it therefore serves as an ablation test
of the extended window subrange interface.

\begin{table}[t]
\caption{Scenario-family parameters used in the benchmark suite. 
Each family runs 8 instances, \(i=1,\ldots,8\).
The last two columns aggregate the
outside-of-slice population over both sides of the slice.}
\label{tab:scenario-families}
\centering
\small
\begin{tabular}{|l|c|c|c|c|}
\hline
Family & 
\(|X_{\mathrm{in}}\cap Y_{\mathrm{in}}|\) &
\(|X_{\mathrm{in}}\setminus Y_{\mathrm{in}}|\) &
\(|X_{\mathrm{out}}\cap Y_{\mathrm{out}}|\) &
\(|X_{\mathrm{out}}\setminus Y_{\mathrm{out}}|\) \\
\hline
\texttt{base\_dense\(_i\)} & \(64i\) & \(4i\) & \(1000i\) & \(200i\) \\
\texttt{base\_sparse\(_i\)} & \(128i\) & \(6i\) & \(2000i\) & \(400i\) \\
\texttt{scale\_dense\(_i\)} & \(256i^2\) & \(8i\) & \(4000i\) & \(800i\) \\
\texttt{scale\_sparse\(_i\)} & \(512i\) & \(16i\) & \(6000i\) & \(1200i\) \\
\texttt{stress\(_i\)} & \(1024i^2\) & \(64i\) & \(8000i\) & \(1600i\) \\
\texttt{stress\_dyn\(_i\)} & \(4096i^2\) & \(1024i^2\) & \(4000i^2\) & \(800i^2\) \\
\hline
\end{tabular}
\end{table}

\subsection{Methodology}

The benchmark suite consists of synthetic reconciliation instances
\((X,Y)\) parameterized by dataset size and disagreement pattern. Each
instance is defined by a slice interval \(\range{l}{u}\), by the number
of common and peer-local elements inside that interval, and by the amount of
common and peer-local context outside the interval. The goal is to vary both
the total size of the reconciliation instance and the structure of the
symmetric difference \(\Diff(X,Y)\), while keeping the protocol logic
unchanged.

For each scenario and backend, the benchmark records preparation time,
bench-time opening/build/reconciliation costs, protocol-level message and
byte counts, estimated used storage, and Linux RSS sampled before and after
the bench run. 
The most relevant metrics are:
\(T_{\mathrm{prep}}\),
\(T_{\mathrm{rec}}\),
\(S_{\mathrm{disk}}\), and
\(S_{\mathrm{RSS}}\).
Here \(T_{\mathrm{prep}}\) is the one-time cost of preparing the backend for
a fixed scenario; \(T_{\mathrm{rec}}\) is the cost of a full reconciliation
run including \(\Diff(X,Y)\) materialization (called \emph{have/need} sets); 
\(S_{\mathrm{disk}}\) is the backend-local disk size used; 
and \(S_{\mathrm{RSS}}\) is
the resident set size sampled after the isolated bench run. 
We also report an in-memory non-persistent \texttt{Vector} backend for comparison, however the comparison is focused on persistent backends.

The benchmark was run on an AMD 3700X with Linux, using file \texttt{mmap} size of
\(2048\,\mathrm{MiB}\), and \(10\) repeated reconciliations per instance, 
averaged into a single reported value. 
Each instance runs its reconciliation task in an isolated fresh process, to reduce contamination of memory usage.
Memory (RSS) is sampled through
\texttt{/proc/self/status} immediately before and after each bench run, so
we measure the memory increment due to RBSR running and accessing the storage.
Protocol parameters are held fixed across backends; 
experiments use Negentropy's default split fanout $b=16$, 
and the default explicit-enumeration cutoff threshold $t=32$.
All computed outputs were verified against a ground truth.
Moreover, we verified that all RBSR reconciliations on the same instance produce exactly the same byte sequence. 
Thus, the comparison isolates the local cost of realizing
the reconciliation oracle, rather than measuring different
communication behavior.

\noindent\textbf{Scope of the evaluation.}
The evaluation is intentionally scoped to reconciliation-heavy,
single-machine workloads under a fixed protocol configuration.
Its purpose is to isolate the local cost of realizing the storage, 
not to characterize concurrent transactions, crash recovery,
cold-cache behavior, or cross-engine portability.
Those dimensions are important, but they answer a different systems
question and are therefore left as future work.

\subsection{Scenario families}

Let
\[
X_{\mathrm{in}} := X \cap \range{l}{u}, \qquad
Y_{\mathrm{in}} := Y \cap \range{l}{u},
\]
and
\[
X_{\mathrm{out}} := X \setminus (X \cap \range{l}{u}), \qquad
Y_{\mathrm{out}} := Y \setminus (Y \cap \range{l}{u}).
\]
The six scenario families are summarized in
Table~\ref{tab:scenario-families}. 
Each scenario is parametric and has \(i=1,\ldots,8\) instances.
The progression is from linearly scaled dense and sparse
reference families, to larger slice-growth families, to disagreement-heavy
stress families. In \texttt{stress\_dyn\(_i\)}, both the intersection set and the symmetric difference inside the slice grow quadratically in \(i\).

\subsection{Results}

Table~\ref{tab:absolute-results} reports family-wise arithmetic means for
the absolute metrics, while Table~\ref{tab:relative-results} reports
family-wise geometric means of per-scenario ratios relative to
\texttt{AELMDB}. These two views are complementary: the absolute table
shows scale, while the relative table makes the cross-backend structure
easier to read.

The main systems result is that \texttt{AELMDB} is the strongest persistent
backend for reconciliation on all evaluated scenario families. Relative to
\texttt{BTreeLMDB}, its reconciliation-time advantage grows from
\(4.69\times\) on \texttt{base\_dense\(_i\)} to \(13.98\times\) on
\texttt{stress\_dyn\(_i\)}. The no-window subrange ablation
\texttt{NoWndAELMDB} remains consistently slower than \texttt{AELMDB}, with
relative reconciliation factors between \(1.08\times\) and \(1.27\times\),
confirming that the window-subrange helper path contributes materially in
addition to the aggregate-enabled storage representation itself.

In terms of memory, \texttt{AELMDB} exhibits a lower
\(S_{\mathrm{RSS}}\) than the other persistent backends in all six evaluated
scenario families. This suggests that the aggregate-augmented
B\textsuperscript{+}-tree also reduces memory pressure under the tested
workloads w.r.t. the baseline \texttt{BTreeLMDB}.
Measured as a family-wise geometric mean of per-scenario ratios, the
\texttt{BTreeLMDB}/\texttt{AELMDB} $S_{\mathrm{RSS}}$ factor rises monotonically from
\(1.06\times\) on \texttt{base\_dense\(_i\)} to \(1.36\times\) on
\texttt{stress\_dyn\(_i\)}. 

Preparation (i.e. insertions in the database) still slightly favors
\texttt{BTreeLMDB} over \texttt{AELMDB}, with relative factors between
\(0.89\times\) and \(0.96\times\). This is not surprising, as
the B\textsuperscript{+}-tree of the \texttt{AELMDB} backend requires
delta updates of the aggregates over the entire root-to-leaf path at 
every database insertion.
However, the increased cost remains limited.
The used disk size \(S_{\mathrm{disk}}\) remains nearly tied between the tested persistent backends, with negligible differences. 
The conclusion is therefore that, on the evaluated
reconciliation-heavy workloads, AELMDB substantially improves the local cost
of range reconciliation while remaining space-competitive and incurring only
a modest increase in preparation cost.

\begin{table}[t]
\caption{Absolute benchmark results.
Bold indicates the best value among \emph{persistent} backends (i.e. \texttt{Vector} is excluded).
Values are arithmetic means over the \(8\) instances
in each family.
}
\label{tab:absolute-results}
\centering
\small
\begin{tabular}{|l|l|r|r|r|r|}
\hline
Family &
Backend &
\multicolumn{1}{c|}{\stackanchor{$T_{\mathrm{prep}}$}{(ms)}} &
\multicolumn{1}{c|}{\stackanchor{$T_{\mathrm{rec}}$}{(ms)}} &
\multicolumn{1}{c|}{\stackanchor{$S_{\mathrm{disk}}$}{(MiB)}} &
\multicolumn{1}{c|}{\stackanchor{$S_{\mathrm{RSS}}$}{(MiB)}} \\
\hline\texttt{base\_dense\(_i\)}
 & \texttt{Vector} & 1.663 & 0.109 & 0.218 & 7.654 \\
 & \texttt{BTreeLMDB} & \bf 15.928 & 0.930 & 0.452 & 8.199 \\
 & \texttt{NoWndAELMDB} & 17.049 & 0.205 & \bf 0.445 & 7.782 \\
 & \texttt{AELMDB} & 17.049 & \bf 0.188 & \bf 0.445 &\bf 7.722 \\
\hline\texttt{base\_sparse\(_i\)}
 & \texttt{Vector} & 3.086 & 0.126 & 0.435 & 8.086 \\
 & \texttt{BTreeLMDB} & \bf 23.275 & 1.288 & 0.877 & 8.577 \\
 & \texttt{NoWndAELMDB} & 25.908 & 0.286 & \bf 0.864 & 7.857 \\
 & \texttt{AELMDB} & 25.908 & \bf 0.263 & \bf 0.864 & \bf 7.747 \\
\hline\texttt{scale\_dense\(_i\)}
 & \texttt{Vector} & 7.068 & 0.186 & 1.074 & 9.367 \\
 & \texttt{BTreeLMDB} & \bf 38.522 & 3.002 & 1.991 & 9.979 \\
 & \texttt{NoWndAELMDB} & 42.782 & 0.404 & \bf 1.970 & 8.529 \\
 & \texttt{AELMDB} & 42.782 & \bf 0.342 & \bf 1.970 & \bf 8.421 \\
\hline\texttt{scale\_sparse\(_i\)}
 & \texttt{Vector} & 7.189 & 0.161 & 1.327 & 9.864 \\
 & \texttt{BTreeLMDB} & \bf 47.863 & 1.882 & 2.598 & 10.173 \\
 & \texttt{NoWndAELMDB} & 50.521 & 0.402 & \bf 2.566 & 8.144 \\
 & \texttt{AELMDB} & 50.521 & \bf 0.314 & \bf 2.566 & \bf 8.041 \\
\hline\texttt{stress\(_i\)} 
 & \texttt{Vector} & 22.787 & 0.421 & 2.655 & 12.585 \\
 & \texttt{BTreeLMDB} & \bf 79.491 & 7.588 & 4.609 & 13.196 \\
 & \texttt{NoWndAELMDB} & 85.675 & 0.869 & \bf 4.577 & 9.895 \\
 & \texttt{AELMDB} & 85.675 & \bf 0.710 & \bf 4.577 & \bf 9.776 \\
\hline\texttt{stress\_dyn\(_i\)}
 & \texttt{Vector} & 78.247 & 19.708 & 9.650 & 33.241 \\
 & \texttt{BTreeLMDB} & \bf 301.308 & 1197.161 & 15.779 & 29.018 \\
 & \texttt{NoWndAELMDB} & 319.332 & 74.800 & \bf 15.706 & 20.214 \\
 & \texttt{AELMDB} & 319.332 & \bf 62.927 & \bf 15.706 & \bf 20.033 \\
\hline
\end{tabular}
\end{table}

\begin{table}[t]
\caption{Relative results versus \texttt{AELMDB}. 
Each entry is the geometric mean, over the \(8\)
scenarios of the family, of the corresponding per-scenario ratio. 
Values larger than \(1\) indicate that the numerator backend is slower or larger
than \texttt{AELMDB}; values smaller than \(1\) indicate the opposite.}
\label{tab:relative-results}
\centering
\small
\begin{tabular}{|l|r|r|r|r|r|r|}
\hline
Family & 
\multicolumn{1}{c|}{\stackanchor{$\frac{\texttt{BTree}}{\texttt{AELMDB}}$}{$T_{\mathrm{rec}}$}} &
\multicolumn{1}{c|}{\stackanchor{$\frac{\texttt{NoWnd}}{\texttt{AELMDB}}$}{$T_{\mathrm{rec}}$}} &
\multicolumn{1}{c|}{\stackanchor{$\frac{\texttt{Vector}}{\texttt{AELMDB}}$}{$T_{\mathrm{rec}}$}} &
\multicolumn{1}{c|}{\stackanchor{$\frac{\texttt{BTree}}{\texttt{AELMDB}}$}{$T_{\mathrm{prep}}$}} &
\multicolumn{1}{c|}{\stackanchor{$\frac{\texttt{BTree}}{\texttt{AELMDB}}$}{$S_{\mathrm{disk}}$}} &
\multicolumn{1}{c|}{\stackanchor{$\frac{\texttt{BTree}}{\texttt{AELMDB}}$}{$S_{\mathrm{RSS}}$}} \\
\hline
\texttt{base\_dense\(_i\)} & 4.69x & 1.08x & 0.59x & 0.94x & 1.02x & 1.06x \\
\texttt{base\_sparse\(_i\)} & 4.82x & 1.10x & 0.50x & 0.89x & 1.01x & 1.11x \\
\texttt{scale\_dense\(_i\)} & 7.27x & 1.17x & 0.55x & 0.90x & 1.01x & 1.18x \\
\texttt{scale\_sparse\(_i\)} & 5.80x & 1.27x & 0.51x & 0.96x & 1.01x & 1.25x \\
\texttt{stress\(_i\)} & 9.20x & 1.21x & 0.58x & 0.92x & 1.01x & 1.32x \\
\texttt{stress\_dyn\(_i\)} & 13.98x & 1.17x & 0.39x & 0.93x & 1.01x & 1.36x \\
\hline
\end{tabular}
\end{table}

\section{Discussion and Open Directions}
\label{sec:discussion}

The main point of this paper is to clarify the \emph{local} storage
requirements behind efficient RBSR
\cite{meyer2023rbsr}. The improvement studied here does
not come from changing the communication logic of RBSR, but from changing
how its reconciliation oracle is realized. Once the required operations
are stated explicitly (composable range summaries, efficient navigation
by ordered position, and small-range enumeration in the recursive
refinement process \cite{meyer2023rbsr}), it becomes
natural to view RBSR through the abstraction of a
range-summarizable order-statistics store.
The RSOS abstraction is largely independent of the specific form of the
protocol-level comparison rule: the storage layer maintains composable
aggregates and order-statistics support, while the concrete protocol decides
how those aggregates are turned into range-comparison values.

Aggregate-augmented
B\textsuperscript{+}-trees are one coherent realization of this view, and
AELMDB shows that such a realization can be made practical inside a
persistent memory-mapped engine by extending LMDB with optional in-tree
aggregate metadata \cite{aelmdb,lmdb}. However, the storage-theoretic
argument is not specific to LMDB, and in practice the gains will depend
on workload characteristics, especially the balance between frequent
reconciliation queries and the maintenance cost of aggregate metadata
under updates.

\texttt{AELMDB} is a reusable storage package rather than a
one-off synthetic data structure. 
The observed speedups are evidence that the
RSOS view can be realized effectively in a persistent prototype 
built on top of LMDB.
Within the evaluated workloads, the gains appear on the
hot reconciliation path that motivated the design, while preparation cost
remains close and storage overhead remains essentially tied.

Several open directions follow. On the analytical side, it would be
useful to derive instance-sensitive bounds for local work that depend not
only on $|\Diff(X,Y)|$ but also on the ordered shape of the mismatches
and on the refinement tree induced by the protocol
\cite{meyer2023rbsr}. On the systems side, it remains
important to evaluate the RSOS view across other storage engines and
update regimes, and to understand more systematically how split
policies, branching factors, and enumeration thresholds interact with the
underlying index; Willow's three-dimensional adaptation\,\cite{willow3drbsr} 
makes explicit,
for example, that efficient splits should balance item counts rather than
merely cut a geometric domain in half. More
conceptually, an interesting question is how far the RSOS abstraction
extends beyond homomorphic summary constructions, especially in light of
recent work showing that RBSR can also be realized with conventional hash
functions rather than homomorphic ones
\cite{meyer2024nonhomomorphic}. 
Finally, extensions to richer ordered
domains (such as composite-key or multidimensional
reconciliation) appear promising, but would require a clearer theory of
balancing and summarization beyond the one-dimensional setting considered
here \cite{willow3drbsr}.

\section{Conclusion}
\label{sec:conclusion}

This paper argued that the natural backend abstraction for
range-based set reconciliation is a range-summarizable order-statistics
store (RSOS), which combines composable range summaries with navigation by
rank. Framed in this way, the storage requirements of efficient recursive
reconciliation can be stated explicitly as an abstract interface rather
than left implicit in a particular implementation strategy.

It then showed that aggregate-augmented
B\textsuperscript{+}-trees realize this abstraction naturally: subtree
counts provide the order-statistics structure needed for balanced
partitioning, while composable aggregates provide efficient range-summary
support. This yields a direct explanation for why such trees are a good
match for RBSR and supports the derived bounds on local reconciliation
work.

The AELMDB implementation and its use through the Negentropy protocol
provide concrete systems evidence for this design. Across the evaluated
workloads, the aggregate-aware design
confirms the practical benefit of moving range-summary and rank-navigation
support into the storage engine, compared with a persistent baseline based on
auxiliary tree structures. A more general analysis about concurrency, crash recovery,
cold-cache behavior, and cross-engine portability remains for future work.

More broadly, the paper helps connect the protocol structure of
range-based reconciliation with the design of persistent ordered storage,
showing that efficient RBSR requires realizing the full storage interface
needed by recursive range refinements: composable range aggregation,
order-statistics navigation, and exact enumeration on small residual ranges.

\bigskip
\subsection*{Code Availability}
\begin{itemize}
    \item AELMDB: \url{https://github.com/amparore/aelmdb}
    \item Negentropy with AELMDB: \url{https://github.com/amparore/negentropy-aelmdb}
    \item C++ wrapper API: \url{https://github.com/amparore/lmdbxx-aelmdb}
    \item Benchmark code: \url{https://github.com/amparore/bench-aelmdb}
\end{itemize}

\bibliographystyle{plain}
\bibliography{bibtex}{}

\clearpage

\appendix
\section*{Appendix}

\section{Range aggregate example}
\label{sec:example:rangeaggr}

\begin{example}
As a concrete instantiation, let
\[
\U = \mathbb{N} \times \mathbb{Z}_{256},
\]
where \(\mathbb{Z}_{256}\) is a set of 8-bit hexadecimal identifiers, and let
\(\prec\) be the lexicographic order on \(\U\). 
Consider the replica state
\[
X =
\{
(10,\texttt{a1}),
(10,\texttt{f3}),
(11,\texttt{1c}),
(13,\texttt{7b})
\}.
\]

Let the element-summary monoid be
\[
(M,\oplus,0_M) = (\mathbb{Z}_{256},\; + \bmod 256,\; 0),
\]
and define
\[
\phi(t,\texttt{id}) := \texttt{id},
\]
where each identifier is interpreted as an element of \(\mathbb{Z}_{256}\).
Thus the order on records is provided by \(\prec\), while identifiers
determine the summary contribution.

Now consider the half-open range
\[
\range{10}{13},
\]
where the chosen bounds select the records of \(X\) whose first component
lies in \(\range{10}{13}\). Then
\[
X \cap \range{10}{13}
=
\{
(10,\texttt{a1}),
(10,\texttt{f3}),
(11,\texttt{1c})
\}.
\]
Its summary is
\[
A\bigl(X \cap \range{10}{13}\bigr)
=
\left(
3,\;
\texttt{a1} + \texttt{f3} + \texttt{1c} \bmod 256
\right).
\]
Interpreting the identifiers as hexadecimal bytes,
\[
\texttt{a1} + \texttt{f3} + \texttt{1c}
= 161 + 243 + 28 = 432 \equiv 176 \pmod{256},
\]
so the aggregate is
\[
A\bigl(X \cap \range{10}{13}\bigr) = (3,\texttt{b0}).
\]



This example illustrates the storage-side aggregate object maintained by the
RSOS abstraction; concrete protocols may derive their compared range values
from this aggregate in different ways.
\end{example}

\section{Bounds for reconciliation over RSOS B\textsuperscript{+}-trees}
\label{sec:bounds}

This section derives local cost bounds for RBSR when the underlying oracle is
realized by an aggregate-augmented B\textsuperscript{+}-tree, as in
Section~\ref{sec:bptree}. The purpose is to connect the protocol-level
recursion of Section~\ref{sec:protocol} with the storage-level
$O(h)$ realization cost of the RSOS operations.
We first bound the cost of one reconciliation step, then sum this cost over one execution tree to obtain an execution-sensitive local bound.

\subsection{Setting and assumptions}

Fix two finite replica states $X,Y \subseteq \U$, and let
\[
R_0 := \range{l_0}{u_0}
\]
be the outer range on which reconciliation is performed. Write
\[
X_0 := X \cap R_0,
\qquad
Y_0 := Y \cap R_0,
\qquad
n_X := |X_0|,
\qquad
n_Y := |Y_0|,
\qquad
n := \max(n_X,n_Y).
\]
Let $b \ge 2$ be the branching parameter of the RBSR refinement and
$t \ge 1$ the explicit-enumeration threshold at which recursion stops 
and the remaining items are listed directly.

Assume that both peers are represented by RSOS oracles realized by
aggregate-augmented B\textsuperscript{+}-trees of branching factor $B$ and
height
\[
h = \Theta(\log_B n),
\]
under the standard bounded-page-size assumption\footnote{Here we assume the usual external-memory setting in which page size
is fixed independently of the dataset size $n$. The branching factor $B$ is
therefore bounded as well, so an $O(Bh)$ term reduces to $O(h)$.}. 
By Theorem~\ref{thm:bptree-rsos}, each oracle supports:
\begin{itemize}
    \item $\Aggregate(l,u)$, $\Rank(z)$, and $\Select(r)$ in $O(h)$ time;
    \item $\Enumerate(l,u)$ in $O(h+k)$ time, where
    $k = |X \cap \range{l}{u}|$ (or the corresponding quantity on $Y$).
\end{itemize}

Throughout this section, we analyze a \emph{static} reconciliation run:
the sets $X$ and $Y$ do not change during the run.

\subsection{The reconciliation tree}

The right combinatorial object for the analysis is the recursion tree
generated by the protocol.

\begin{definition}[Reconciliation tree]
\label{def:reconciliation-tree}
Fix one execution of RBSR on the outer range $R_0$. Its
\emph{reconciliation tree} $\mathcal{T}$ is the rooted tree defined as
follows.
\begin{itemize}
    \item The root is $R_0$.
    \item Each queried range is a node of $\mathcal{T}$.
    \item A node is a leaf if the protocol returns either
    \textsc{Skip} or \textsc{IdList} on that range.
    \item A node is internal if the protocol returns \textsc{Split}; its
    children are the child ranges returned by the split.
\end{itemize}
\end{definition}

Thus the reconciliation tree records exactly the recursive refinement
performed by the protocol. Its leaves are the ranges that are resolved
without further subdivision.

Let:
\begin{align*}
Q &:= \text{total number of queried ranges} = |\mathcal{T}|,\\
I &:= \text{number of internal (\textsc{Split}) nodes},\\
L &:= \text{number of leaves} = Q-I.
\end{align*}

Since each internal node has at most $b$ children, the basic tree relation
gives
\begin{equation}
\label{eq:tree-basic}
L \le 1 + (b-1)I,
\qquad
Q = I + L \le 1 + bI.
\end{equation}

This simple inequality will let us convert bounds on the number of split
nodes into bounds on the total number of queried ranges.

\subsection{Cost of one protocol step}

We first bound the local work performed by one peer when answering one range
query.

\begin{lemma}[Cost of one responder step]
\label{lem:single-step-cost}
Consider one call to $\Call{Respond}{O,l,u,f,b,t}$ in
Algorithm~\ref{alg:rbsr-step}, where $O$ is realized by an
aggregate-augmented B\textsuperscript{+}-tree of height $h$.
\begin{enumerate}
    \item A \textsc{Skip} answer costs $O(h)$.
    \item An \textsc{IdList} answer costs $O(h+k)$, where
    $k = |S \cap \range{l}{u}|$ for the responder's local set $S$.
    \item A \textsc{Split} answer costs $O(bh)$.
\end{enumerate}
\end{lemma}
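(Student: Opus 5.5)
The plan is to trace Algorithm~\ref{alg:rbsr-step} line by line, charge each executed line either $O(1)$ or the cost of the RSOS primitive it invokes, and then take the bounds for those primitives from Theorem~\ref{thm:bptree-rsos}. That theorem gives $O(h)$ for $\Aggregate$, $\Rank$ and $\Select$ and $O(h+k)$ for $\Enumerate$, under bounded page size. One assumption is needed up front: evaluating the comparison map $\fp$ on an aggregate and comparing two values in $F$ each take $O(1)$ time. This is justified because aggregates and comparison values are fixed-size (for example, hashing a fixed-length encoding in Negentropy), which fits the basic-operations assumption of Theorem~\ref{thm:bptree-rsos}.

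\textbf{Common prefix.} Every call performs one $\Aggregate(l,u)$, unpacks it into $(m_X,\sigma_X)$, computes $\fp(a_X)$ and tests equality with $f_Y$. This costs $O(h)+O(1)=O(h)$.

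\textbf{Case \textsc{Skip}.} The call returns right after the prefix, so its cost is $O(h)$. This proves item~1.

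\textbf{Case \textsc{IdList}.} The prefix is followed by one test $m_X\le t$ and one call to $\Enumerate(l,u)$. Packaging the output is linear in the number of items returned, which is $k=|S\cap\range{l}{u}|$. The total is $O(h)+O(h+k)=O(h+k)$. This proves item~2.

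\textbf{Case \textsc{Split}.} After the prefix there is one call to \textsc{SplitByRank} (Algorithm~\ref{alg:split-by-rank}), followed by one $\Aggregate$ and one $\fp$ evaluation for each child range.
\begin{itemize}
\item \textsc{SplitByRank} performs one $\Aggregate$, one $\Rank$, and $b-1$ calls to $\Select$, each preceded by $O(1)$ arithmetic for $q_j$. This gives $O(h)+O(h)+(b-1)O(h)=O(bh)$.
\item \textsc{NormalizeCuts} removes consecutive duplicates from a list of $b+1$ cuts in one linear pass. Since cut values are fixed-size keys compared in $O(1)$, this costs $O(b)$.
\item After normalization there are at most $b$ child ranges. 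Each needs one $\Aggregate$ and one $\fp$, so this part costs $b\cdot O(h)=O(bh)$.
\end{itemize}
Adding the prefix, the split call cost, $O(b)$ for normalization and $O(bh)$ for the child aggregates gives $O(bh)$, since $h\ge 1$. This proves item~3.

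The only delicate point is the \textsc{Split} case, and specifically the bookkeeping for the cuts. The select indices $r_0+q_j$ must be valid ranks, which holds because $q_j<m$ whenever $m>0$, so $\Select$ is never called out of range. The number of child ranges must also be at most $b$ after normalization, so that the per-child aggregate cost stays at $O(bh)$. It also helps to note that \textsc{SplitByRank} recomputes $\Aggregate(l,u)$ even though the caller already has it. This repeated call adds only $O(h)$, so the bound is unaffected.
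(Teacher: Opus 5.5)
Your proposal is correct and follows essentially the same route as the paper. Both proofs charge $O(h)$ for the common $\Aggregate(l,u)$ prefix, $O(h+k)$ for the \textsc{IdList} enumeration, and $O(bh)$ for the \textsc{Split} case, which makes $O(b)$ calls to $\Aggregate$, $\Rank$ and $\Select$ at $O(h)$ each by Theorem~\ref{thm:bptree-rsos}. Your extra bookkeeping only makes explicit what the paper leaves implicit: the $O(1)$ cost of $\fp$, the $O(b)$ normalization pass, the recomputed parent aggregate inside \textsc{SplitByRank}, and the validity of the select indices.
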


\begin{proof}
In all cases, the responder begins by evaluating
$\Aggregate(l,u)$ and its comparison value, which costs $O(h)$.

If the comparison values match, the answer is \textsc{Skip}, so the total cost is
$O(h)$.

If the range mismatches but the responder's local cardinality is at most
$t$, the responder returns \textsc{IdList} together with
$\Enumerate(l,u)$. By Theorem~\ref{thm:bptree-rsos}, this costs
$O(h+k)$, where $k$ is the number of enumerated items.

If the responder returns \textsc{Split}, the work consists of:
\begin{itemize}
    \item one parent $\Aggregate(l,u)$ call;
    \item one $\Rank(l)$ call;
    \item $b-1$ calls to $\Select$ for the interior cut points;
    \item up to $b$ child calls to $\Aggregate$ in order to obtain 
        the comparison values of the returned child ranges.
\end{itemize}
Each of these operations costs $O(h)$, so the total is
$O((2b+1)h)=O(bh)$.
\end{proof}

The importance of Lemma~\ref{lem:single-step-cost} is that, once the RSOS is
realized by an augmented B\textsuperscript{+}-tree, every non-output
protocol step costs only logarithmic local work. The remaining question is
therefore how many such steps one reconciliation run can generate.

\subsection{Execution-sensitive local cost}

We now sum the single-step cost over the reconciliation tree.

For one execution, let:
\begin{align*}
L_{\mathrm{skip}} &:= \text{number of \textsc{Skip} leaves},\\
L_{\mathrm{id}} &:= \text{number of \textsc{IdList} leaves},\\
K &:= \text{total number of explicitly enumerated items returned by this peer
over all \textsc{IdList} leaves}.
\end{align*}
Thus
\[
L = L_{\mathrm{skip}} + L_{\mathrm{id}}.
\]

\begin{theorem}[Execution-sensitive responder-side local cost]
\label{thm:execution-sensitive-cost}
Consider one static reconciliation run over an RSOS realized by an
aggregate-augmented B\textsuperscript{+}-tree of height $h$.
Then the total responder-side local work performed by one peer 
while answering queries is
\begin{equation}\label{eq:execution-sensitive-main}
    T_{\mathrm{loc}} = O\!\left( hL + bhI + K \right)
\end{equation}
and since $L \leq Q$, also
\begin{equation}\label{eq:execution-sensitive-simplified}
    T_{\mathrm{loc}} = O\!\left( hQ + bhI + K \right)
\end{equation}
If $b$ is treated as a fixed protocol constant, this simplifies to
\[
T_{\mathrm{loc}} = O(Qh + K).
\]
\end{theorem}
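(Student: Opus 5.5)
The plan is to sum the per-step costs of Lemma~\ref{lem:single-step-cost} over the nodes of the reconciliation tree $\mathcal{T}$ of Definition~\ref{def:reconciliation-tree}. Each node corresponds to exactly one call to \textsc{Respond} on the answering peer, so we charge each node of $\mathcal{T}$ the cost of its responder step. The set $X$ is static during the run, so Theorem~\ref{thm:bptree-rsos} applies with the same height $h$ at every step. Hence the per-step bounds can be added without amortization issues.

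First, partition the nodes into the three answer types: $L_{\mathrm{skip}}$ Skip leaves, $L_{\mathrm{id}}$ IdList leaves, and $I$ Split nodes. By Lemma~\ref{lem:single-step-cost}, each Skip leaf costs at most $c\,h$ and each Split node at most $c\,bh$, for a single constant $c$ coming from the $O(\cdot)$ bounds of Theorem~\ref{thm:bptree-rsos}. The $i$-th IdList leaf costs at most $c(h+k_i)$, where $k_i$ is the number of items enumerated there. Summing gives
\[
T_{\mathrm{loc}} \le c\,h L_{\mathrm{skip}} + c\sum_{i=1}^{L_{\mathrm{id}}}(h+k_i) + c\,bh I
= c\bigl(hL + K + bhI\bigr),
\]
using $L = L_{\mathrm{skip}}+L_{\mathrm{id}}$ and $K=\sum_i k_i$ by definition. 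This is \eqref{eq:execution-sensitive-main}.

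The remaining forms follow directly. Since $L = Q - I \le Q$, \eqref{eq:execution-sensitive-simplified} follows. For fixed $b$, we have $bhI \le b\,hQ = O(hQ)$ because $I\le Q$, so the bound collapses to $O(Qh+K)$. Alternatively one could invoke \eqref{eq:tree-basic}, but $I\le Q$ already suffices.

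The only delicate point is bookkeeping: making sure each node of $\mathcal{T}$ is charged exactly once, and that the Split cost already includes the child $\Aggregate$ calls. The children are then charged again as nodes in their own right when they are queried. This double count is harmless, costing only a constant factor absorbed by the $bhI$ term. I would also note explicitly that $K$ counts only this peer's enumerated items, matching the $k$ in Lemma~\ref{lem:single-step-cost}. The receiver-side enumeration used for comparison is not responder work and is excluded by the statement's scope.
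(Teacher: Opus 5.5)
Your proposal is correct and follows the paper's proof essentially step for step. You partition the nodes of $\mathcal{T}$ into \textsc{Skip}, \textsc{IdList}, and \textsc{Split} nodes, charge each by Lemma~\ref{lem:single-step-cost}, and sum to get $O(hL+bhI+K)$, then use $L\le Q$ and $I\le Q$ for the simplified forms; your explicit constant $c$ and your note that the children's repeated $\Aggregate$ calls are already covered by the bound are only harmless refinements of the same argument.
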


\begin{proof}
Each \textsc{Skip} leaf contributes $O(h)$ by
Lemma~\ref{lem:single-step-cost}.
Each \textsc{IdList} leaf contributes $O(h+k_v)$, where $k_v$ is the number
of items explicitly returned at that leaf. Summing over all
\textsc{IdList} leaves gives $O(hL_{\mathrm{id}} + K)$.
Each internal \textsc{Split} node contributes $O(bh)$.

Summing these contributions over the reconciliation tree gives
\[
T_{\mathrm{loc}}
=
O\!\left(
hL_{\mathrm{skip}}
+
hL_{\mathrm{id}}
+
bhI
+
K
\right)
=
O\!\left(
hL + bhI + K
\right).
\]
Since $Q = I+L$, this is
\[
T_{\mathrm{loc}} = O(hQ + bhI + K).
\]
Finally, $I \le Q$, so
\[
T_{\mathrm{loc}} = O(bQh + K).
\]
\end{proof}

\begin{remark}[Receiver-side work at \textsc{IdList} leaves]
Theorem~\ref{thm:execution-sensitive-cost} counts the work of producing replies.
If one instead wishes to bound the full bilateral local work of a reconciliation
run, one must add the receiver-side cost of comparing each returned list with
the receiver's local ordered contents on the same range.
\end{remark}

Equation~\eqref{eq:execution-sensitive-simplified} is the basic local cost
bound of the section. It cleanly separates:
\begin{itemize}
    \item the \emph{protocol-side complexity}, captured by the number $Q$ of
    queried ranges and the total explicit output size $K$;
    \item the \emph{storage-side complexity}, captured by the factor $h$,
    i.e.\ the logarithmic height of the RSOS B\textsuperscript{+}-tree.
\end{itemize}

\paragraph{Bounded-threshold simplification.}
Since every \textsc{IdList} leaf returns at most $t$ items, one always has
\[
K \le t L_{\mathrm{id}} \le tQ.
\]
Hence, for fixed $b$ and fixed threshold $t$,
\begin{equation}
\label{eq:qh-bound}
T_{\mathrm{loc}} = O(Qh).
\end{equation}

This is the cleanest high-level statement: once the protocol recursion tree
has size $Q$, an RSOS B\textsuperscript{+}-tree executes the run in
logarithmic local time per queried range.

\end{document}